\newcommand\pubdate{\today}
\def\support{\footnote{Corresponding author: lwu@pku.edu.cn}}
\def\Title#1{\begin{center} {\Large #1 } \end{center}}
\def\Author#1{\begin{center}{  #1} \end{center}}
\def\Address#1{\begin{center}{ \it #1} \end{center}}
\newenvironment{Abstract}{\begin{quotation}  }{\end{quotation}}
\def\beq{\begin{equation}}
\def\eeq#1{\label{#1}\end{equation}}
\def\eeqn{\end{equation}}
\def\beqa{\begin{eqnarray}}
\def\eeqa#1{\label{#1}\end{eqnarray}}
\def\eeqan{\end{eqnarray}}
\let\bar=\overbar
\def\Dslash{\not{\hbox{\kern-4pt $D$}}}
\def\dslash{\not{\hbox{\kern-2pt $\del$}}}
\def\msb{{\bar{\ssstyle M \kern -1pt S}}}
\begin{document}
\begin{titlepage}
\pubdate
\vfill
\Title{Equity Market Impact Modeling: an Empirical Analysis for Chinese Market}
\vfill
\Author{Shiyu Han$^{a}$, Lan Wu$^{a,}$\support, and Yuan Cheng$^{a}$\\}
\Address{
$^{a}${\small School of Mathematical Sciences, Peking University, Beijing 100871, China}}
\vfill
\begin{Abstract}
Market impact has become a subject of increasing concern among academics and industry experts. We put forward a price impact model which considers the heteroscedasticity of price in the time dimension and dependency between permanent impact and temporary impact. We discuss and derive the extremum of the expectation of permanent impact and realized impact by constructing several special trading trajectories. Given our use of a large trade and quote tick records of 17,213,238,343 compiled from the Chinese stock market, the model assessment ultimately suggest that our model is better than Almgren's model. Interestingly, the result of random effect analysis indicates the parameter $\alpha$, which is the exponent of the impact function, is a constant with a value of around 0.7 across all stocks. Our model and empirical result would give academia some insight of mechanism of Chinese market, and can be applied to algorithm trading.\\
\noindent Key Words: Market impact model; Optimal impact; Model assessment; Chinese stock market.
\end{Abstract}
\vfill
\end{titlepage}
\def\thefootnote{\fnsymbol{footnote}}
\section{Introduction}
A transaction cost consists of two parts, the direct cost and the indirect cost. The study of transaction cost can provide some information about the structure of market, and also some evidence about the efficiency of the market. The indirect cost is difficult to observe, measure, and control. Yet it can be optimized through trading techniques. Thus, the indirect cost is a subject of concern in academia and industry.

The former researchers have developed the basis of impact model. Basing on the model, there are a series of problems such as the optimization problem, parameters estimation and the impact functions. Some literatures also involve the application of the impact model.

\cite{kraus1972price} first decomposed price impact into permanent impact and temporary impact. They pointed out that the permanent impact was caused by a change in investors' expectations due to information gleaned from trading. The temporary impact, on the other hand, was caused by a short non-equilibrium in trading or the controlling of quotes by dealers, and it vanished after a period of time. \cite{kyle1985continuous} introduced the linear impact function, while \cite{torre1997barra} pointed that the impact function should be the power function. \cite{huberman2004price} proved that the permanent impact function should be linear, i.e., the power index must equal to 1 or it contradicted the non-arbitrage assumption. Moreover, the permanent impact function should be independent of the length of trading period. The temporary impact function, however, could be nonlinear and conformed more closely to microstructure theory and the empirical standard.

\cite{almgren2001optimal} obtained the optimal trading strategy given the conditions of the linear impact function. They also established the efficient frontier for the mean and variance of the optimal trade trajectories with different levels of risk aversion, and identified the closed-form solutions of the corresponding optimal trading strategies.\cite{almgren2003optimal} extended the linear temporary impact function to the power function. \cite{almgren2005direct} used the results attained by \cite{huberman2004price} to estimate the impact function based on data equity at Citigroup. They showed that permanent impact cost was irrelative to the trading duration and manner, while temporary impact cost was closely related to them. Moreover, the estimation of the power exponent for permanent impact was close to 0.8, while that of temporary impact was close to 0.6.

On the convexity of the impact function, \cite{madhavan1991bayesian} and \cite{hausman1992ordered} pointed out that the it was a concave function of the absolute value of the transaction volume. Unlike \cite{almgren2005direct}, who assumed that the permanent impact function was linear, they had assumed that the temporary impact function was strictly concave. \cite{monch2005optimal} found that the impact function was convex while researching the German security market. \cite{bhattacharya1991insiders} and \cite{spiegel2000asymmetric} showed that the impact function was a nonlinear S-form. \cite{chan1993institutional} and \cite{bikker2007market} studied the impact caused by different initiatives of buying and selling.

\cite{bikker2007market} considered how factors other than changes in price influenced trading. \cite{almgren2001optimal} attempted to calculate the liquidity-adjusted VaR and adapted it to the context of risk management. \cite{Lim2014Dynamic} considered dynamic portfolio management based on impact cost. \cite{draviam2011dynamic} converted the optimal execution trajectory problem with explicit and impact costs into an optimal stochastic control problem with a restricted condition.

Our motivation is to study the feature of price impact in Chinese stock market. We suppose that there exists a consistent impact function in a specific market, here we focus on Chinese market. We improve the previous impact models developed in \cite{almgren2001optimal} and introduce a statistical inference to assess the model. Specifically, we propose a more explicit form of the permanent impact function and temporary impact function and give the joint distribution of the statistics of the permanent impact and the temporary impact. Consequently, we acquire a strong explanation of Chinese market impact. Further, instead of looking for the optimal executing trajectory to minimize the utility function in \cite{almgren2001optimal}(the mean-variance utility), we derive the extremum of the expectation of permanent impact and realized impact by constructing corresponding trading trajectories. In empirical analysis, we calibrate the parameters in the impact models and made a model assessment on whole data in Chinese market. The empirical process and results should have a very strong practical meaning.

In Section 2, we present some preliminary work. In Section 3, we derive our main result. In Section 4, we describe our data. Section 5 is our empirical research. The paper ends up with a conclusion in Section 6.

\section{Preliminary}
In this section, we set up the process of the asset price and quantify the permanent impact and the temporary impact. For the sake of convenience, most of the variables and assumptions in our model follow from the framework in \cite{almgren2005direct}.
\subsection{Variables}
\subsubsection*{Volume time}
In the real market, the volatility of the asset price varies with the time over the course of one trading day. More specifically, an identically sized order can cause greater impact in the middle of the day than at the opening or closing of the trading day. To describe this phenomenon, we use volume time instead of physical time in our model. The mapping from real time to volume time can be defined as
\begin{equation}
\kappa(t)=t_{open}+V(t)\frac{t_{close}-t_{close}}{V},\quad t\in[t_{open},t_{close}],
\end{equation}
where V(t) is the transaction volume up to time t, V is the total transaction volume over the entire trading day, and $t_{open}$ and $t_{close}$ represent the opening and closing times of a trading day, respectively. In the following context, if time is not specified as physical time, it should be assumed the volume time.

\subsubsection*{Price variables}
As introduced in \cite{almgren2005direct}, the impact caused by executing an order contained a temporary compoenent, which was caused by a transient liquidity drought or other reasons and vanished after a period of time with the liquidity replenished. Here we define the beginning of the execution of an order as time 0, the end of the execution as time T, and the time at which the temporary impact vanishes as $T_{post}$, which is greater than T. Correspondingly, we define the prices at different times as
\begin{center}
 \begin{tabular}{c|r}
 \hline
 $S_0$ & the price before the impact occurs\\

  $S_{post}$ & the price when the temporary impact vanishes\\

  $\bar S$& the volume weighted average of the realized price\\
  \hline
 \end{tabular}
 \end{center}

$S_0$ represents the price before the impact occurs, $S_{post}$ represents the price when the temporary impact vanishes, and $\bar S$ indicates the volume weighted volume of the realized price in the execution period $[0,T]$.
\subsubsection*{Impact statistics}
  We decompose the impact into two components, the permanent impact and the temporary impact. The permanent impact reflects the information released during the trading process. The temporary impact reflects the loss to the counterparties caused by positive trading. The permanent impact is lasting, while the temporary impact vanishes as time passes. Two statistics to describe these two impacts are defined as follows:
\begin{align}
Permannent\;impact:\quad &I=\frac{S_{post}-S_0}{S_0},
\label{Idefine}\\
Realized\;impact:\quad &J=\frac{\bar S-S_0}{S_0}.
\label{Jdefine}
\end{align}
According to these two impact statistics, the temporary impact can be expressed as $I-J/2$.

\subsection{Price Process}
We assume asset price $S_t$ follows the arithmetic Brownian motion,
\begin{equation}
dS_t=S_0g(v_t)dt+S_0\sigma dB_t,\quad t\geq 0,
\label{S_process}
\end{equation}
where the trading velocity $v_t$ is deterministic, $g(v_t)$ in the drift term is the permanent impact function, which depends on trading velocity $v_t$ and $g(v_t)=0$ when $t>T$, and $B_t$ is the standard Brownian motion. It is important to point out that asset price $S_t$ in (\ref{S_process}) is the real price but not the transaction price, which we denote by $\tilde S_t$ in the following assumption:
\begin{equation}
\tilde S_t=S_t+S_0h(v_t),
\end{equation}
where $h(v)$ is the temporary impact function and depends on the trading velocity as well.

Under the process assumption above, we can calculate the concrete form of (\ref{Idefine}) and (\ref{Jdefine}) as follows (a detailed calculation is provided in the appendix):
\begin{align}
I&=\int_0^Tg(v_t)dt+\sigma B_{T_{post}}\label{Istat},\\
J&=\int_0^T\frac{T-t}{T}g(v_t)dt+\frac{1}{T}\int_0^Th(v_t)dt+\frac{\sigma}{T}\int_0^TB_tdt\label{Jstat}.
\end{align}

The impact functions $g(\cdot)$ and $h(\cdot)$ in (\ref{Istat}) and (\ref{Jstat}) have been studied by various researchers, as discussed in the introduction. In our work, we adopt the power functions as previous literatures:
\begin{align}
g(v)=\gamma sgn(v)|v|^\alpha,\\
h(v)=\eta sgn(v)|v|^\beta,
\end{align}

\section{Main Result}
\subsection{The joint distribution of $I$ and $J$}
 At first, we derive the joint distribution of $I$ and $J$ as the following proposition.

\newtheorem{theorem}{Propostion}
\begin{theorem}
\label{Propostion1}
$I$ and $J$ defined in (\ref{Istat}) and (\ref{Jstat}) follow Bivariate Normal Distribution as $(I,J)^T\sim N(\mu,\Sigma)$, where
\[\mu=\begin{pmatrix}\int_0^Tg(v_t)dt\\\int_0^T\frac{T-t}{T}g(v_t)dt+\frac{1}{T}\int_0^Th(v_t)dt\end{pmatrix},\quad
\Sigma=\begin{pmatrix}\sigma^2T_{post}&\frac{1}{2}\sigma^2T\\\frac{1}{2}\sigma^2T&\frac{1}{3}\sigma^2T\end{pmatrix}.\]
\end{theorem}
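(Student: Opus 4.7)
The plan is to exploit the fact that $v_t$ is deterministic, so the integrals involving $g(v_t)$ and $h(v_t)$ in (\ref{Istat}) and (\ref{Jstat}) are deterministic constants that only contribute to the mean. What remains random in $I$ is the term $\sigma B_{T_{post}}$, and what remains random in $J$ is $\frac{\sigma}{T}\int_0^T B_t\,dt$. Both are linear functionals of the standard Brownian motion $B$, hence jointly Gaussian, so $(I,J)^T$ is bivariate normal and the proposition reduces to verifying the entries of $\mu$ and $\Sigma$.

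The mean vector is immediate: taking expectations in (\ref{Istat}) and (\ref{Jstat}) kills $B_{T_{post}}$ and $\int_0^T B_t\,dt$ (both zero-mean), leaving exactly the deterministic integrals that appear in $\mu$. For the variance of $I$, I just use $\mathrm{Var}(\sigma B_{T_{post}}) = \sigma^2 T_{post}$. For the variance of $J$, I compute
\[
\mathrm{Var}\!\left(\int_0^T B_t\,dt\right) = \int_0^T\!\!\int_0^T \min(s,t)\,ds\,dt = \frac{T^3}{3},
\]
using the Brownian covariance kernel and the standard Fubini argument, which gives $\mathrm{Var}(J) = \frac{\sigma^2}{T^2}\cdot\frac{T^3}{3} = \frac{\sigma^2 T}{3}$.

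For the covariance entry I write
\[
\mathrm{Cov}(I,J) = \frac{\sigma^2}{T}\int_0^T \mathrm{Cov}(B_{T_{post}},B_t)\,dt = \frac{\sigma^2}{T}\int_0^T \min(T_{post},t)\,dt,
\]
and here I need the observation that $T_{post} > T \geq t$ for every $t\in[0,T]$, so $\min(T_{post},t)=t$ and the integral evaluates to $T^2/2$, yielding $\mathrm{Cov}(I,J) = \sigma^2 T/2$. This matches the off-diagonal entries of $\Sigma$.

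The routine obstacles are just the two double/single integrals of the covariance kernel; there is no real difficulty once one recognizes that determinism of $v_t$ isolates the Gaussian structure. The only subtle point worth flagging explicitly is the use of $T_{post}>T$ to collapse the $\min$ in the covariance computation, since this is the assumption that guarantees a clean closed form for $\Sigma_{12}$.
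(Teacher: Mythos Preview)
Your proposal is correct and follows the same overall strategy as the paper: isolate the deterministic drift to read off $\mu$, then compute the second moments of the Gaussian residuals. The only technical difference is that the paper first rewrites $\frac{1}{T}\int_0^T B_t\,dt$ as the It\^o integral $\frac{1}{T}\int_0^T (T-t)\,dB_t$ (via integration by parts, done in the appendix) and then applies the It\^o isometry to get $\mathrm{Var}(J)$ and $\mathrm{Cov}(I,J)$, whereas you work directly with the Brownian covariance kernel $\min(s,t)$ and Fubini. Both routes are standard and equally short; your version has the small advantage of making the role of the assumption $T_{post}>T$ explicit in collapsing $\min(T_{post},t)$ to $t$, while the paper's It\^o-isometry route hides this step inside the truncation $\int_0^{T_{post}}dB_t\mapsto\int_0^T dB_t$ when pairing with $\int_0^T(T-t)\,dB_t$.
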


\begin{proof}
According to the property of Gaussian process, $I$ and $J$ follow Bivariate Normal Distribution. Basing on (\ref{Istat}) and (\ref{Jstat}), we can easily obtain the covariance matrix by the property of Ito integral.
\begin{align*}
Var(I)=E[I-E(I)]^2&=E(\sigma^2B_{T_{post}}^2)=\sigma^2T_{post},\\
Var(J)=E[J-E(J)]^2&=E[\frac{\sigma}{T}\int_0^T(T-t)dB_t]^2\\
                  &=\frac{\sigma^2}{T^2}\int_0^T(T-t)^2dt\\
                  &=\frac{1}{3}\sigma^2T.\\
\end{align*}
The covariance of $I$ and $J$ can be calculated as follows:
\begin{align*}
Cov(I,J)&=E[(I-E(I))(J-E(J))]\\
        &=E[\sigma \int_0^{T_{post}}dB_t\cdot\frac{\sigma}{T}\int_0^T(T-t)dB_t]\\
        &=\frac{\sigma^2}{T}E[\int_0^TdB_t\int_0^T(T-t)dB_t]\\
        &=\frac{\sigma^2}{T}\int_0^T(T-t)dt\\
        &=\frac{1}{2}\sigma^2T.
\end{align*}
Therefore, $I$ and $J$ follow the bivariate normal distribution as $(I,J)^T\sim N(\mu,\Sigma)$, where
\[\mu=\begin{pmatrix}\int_0^Tg(v_t)dt\\\int_0^T\frac{T-t}{T}g(v_t)dt+\frac{1}{T}\int_0^Th(v_t)dt\end{pmatrix},\quad
\Sigma=\begin{pmatrix}\sigma^2T_{post}&\frac{1}{2}\sigma^2T\\\frac{1}{2}\sigma^2T&\frac{1}{3}\sigma^2T\end{pmatrix}.\]
\end{proof}

\subsection{Optimal Impact}
Impact always exists in the trading process, though the magnitude varies. Instead of attempting to describe or even control for impact cost in the trading process, market participants may pay attention to the description of the impact directly. More specifically, in terms of expectations, market participants may ask: What is the optimum of the impact for a transaction of a fixed number of shares over a specified time interval? In what follows, we provide a specific solution of this problem given the explicit form of the impact function.

As mentioned above, we first suppose there are $X$ shares to be transacted over a given time interval $[0,T]$ (note that the time here is volume time) in a specific direction, i.e., buy or sell. Note that we want to find the optimum of the impact by transaction. In the following discussion, we denote the trajectory of trading by $x_t$, with an initial position of $x_0=0$ and a final position of $x_T=X$, which keeps a buy direction. Furthermore, the trading velocity $v_t$ does not change the sign. Similarly, we can analysis the opposite direction of sell.

Following the previous part of the paper, we take $I$ as the permanent impact, which exists even after the final transaction, while $J$ is the realized impact, which is the average impact during the trading process. Therefore, we can formulate the best trading trajectory to maximize or minimize the expected impact.

We propose the following optimizations:
\begin{align}
\begin{cases}
\min\,or\,\max\;E[I]\\
s.t.\quad x_0=0,\;x_T=X\;and\;v_t\geq0,
\end{cases}
\label{minmaxI}
\end{align}
and
\begin{align}
\begin{cases}
\min\,or\,\max\;E[J]\\
s.t.\quad x_0=0,\;x_T=X\;and\;v_t\geq0,
\end{cases}
\label{minmaxJ}
\end{align}
where $v_t=\dot{x_t}$.

The optimization problems (\ref{minmaxI}) and (\ref{minmaxJ}) can be generalized to the following variational problems:
\begin{align}
\min\;or\;\max\;\int_0^TF(t,x_t,v_t)dt.
\label{minmax}
\end{align}
In the permanent impact case, we have
\begin{equation}
F(t,x_t,v_t)=g(v_t),
\label{Fpermanent}
\end{equation}
in the realized impact case, we have
\begin{equation}
F(t,x_t,v_t)=\frac{T-t}{T}g(v_t)+\frac{1}{T}h(v_t).
\label{Ftemporary}
\end{equation}

To solve problem (\ref{minmax}), we consider the Euler-Lagrange equation:
\begin{equation}
F_x(t,x,v)=\frac{d}{dt}F_{v}(t,x,v).
\end{equation}
By substituting (\ref{Fpermanent}) and (\ref{Ftemporary}) into the Euler-Lagrange equation respectively, we obtain two differential equations. In (\ref{Fpermanent}) case, the equation is:
\begin{equation}
\label{ODE:I}
\gamma v\alpha(\alpha-1)v^{\alpha-2}=0,
\end{equation}
while in (\ref{Ftemporary}) case, the equation is:
\begin{equation}
\label{ODE:J}
-\frac{1}{T}\gamma\alpha v^{\alpha-1}+\frac{T-t}{T}\gamma\alpha(\alpha-1)v^{\alpha-2}\dot{v}
+\frac{1}{T}\eta\beta(\beta-1)v^{\beta-2}\dot{v}=0.
\end{equation}

The optimization problems (\ref{minmaxI}) and (\ref{minmaxJ}) can be solved through equation (\ref{ODE:I}) and equation (\ref{ODE:J}) with the boundary conditions.

Equations (\ref{ODE:I}) and (\ref{ODE:J}) are mainly parameterized by $\alpha$ and $\beta$, and they involve nonlinear problems for which no closed-form solutions can be obtained. Nevertheless, we can still obtain analytical results in special cases, which are shown in what follows.

\begin{theorem}
\label{Propostion2}
The solutions of optimization problem (\ref{minmaxI}) and (\ref{minmaxJ}) for some specific case are as bellow,
\begin{enumerate}
\item[(1)] For problem (\ref{minmaxI})\\
when $\alpha<1$, $\min E(I)=0$, $\max E(I)=\gamma T^{1-\alpha}X^\alpha$,\\
when $\alpha=1$, $\min E(I)=\max E(I)=\gamma X$,\\
when $\alpha>1$, $\min E(I)=\gamma T^{1-\alpha}X^\alpha$, $\max E(I)=+\infty.$
\item[(2)] For problem (\ref{minmaxJ})\\
when $\alpha=1,\beta<1,\beta\neq\frac{1}{2}$,\\
$~~~~~~\min E(J)=0$,\\
$~~~~~~\max E(J)=\frac{\eta^2\beta^2(\beta-1)}{\gamma T(2\beta-1)}[(\frac{\gamma}{\eta\beta}T+C_1)^{\frac{2\beta-1}{\beta-1}}
  -C_1^{\frac{2\beta-1}{\beta-1}}]+\gamma C_2$,\\
when $\alpha=1,\beta=\frac{1}{2}$,\\
$~~~~~~\min E(J)=0$,\\
$~~~~~~\max E(J)=\frac{\eta^2}{4\gamma T}\log(1+\frac{2\gamma T}{\eta C_1})$,\\
when $\alpha=1,\beta=1$,\\
$~~~~~~\min E(J)=\frac{\eta X}{T}$,\\
$~~~~~~\max E(J)=\frac{\eta X}{T}+\gamma X$,\\
when $\alpha=1,\beta>1$,\\
$~~~~~~\min E(J)=\frac{\eta^2\beta^2(\beta-1)}{\gamma T(2\beta-1)}[(\frac{\gamma}{\eta\beta}T+C_1)^{\frac{2\beta-1}{\beta-1}}
  -C_1^{\frac{2\beta-1}{\beta-1}}]+\gamma C_2$,\\
  $~~~~~~\max E(J)=+\infty$,\\
when $\beta=1,\alpha<1$,\\
$~~~~~~\min E(J)=\frac{\eta X}{T}$,\\
$~~~~~~\max E(J)=\gamma(\frac{\alpha-2}{\alpha-1})^{\alpha-1}X^\alpha T^{1-\alpha}+\frac{\eta X}{T}$,\\
when $\beta=1,1<\alpha\leq2$,\\
$~~~~~~\min E(J)=\frac{\eta X}{T}$,\\
$~~~~~~\max E(J)=+\infty$,\\
when $\beta=1,\alpha>2$,\\
$~~~~~~\min E(J)=\gamma(\frac{\alpha-2}{\alpha-1})^{\alpha-1}X^\alpha T^{1-\alpha}+\frac{\eta X}{T}$,\\
$~~~~~~\max E(J)=+\infty$.
\end{enumerate}
where $C_1$, $C_2$ satisfy the following equations,
  \begin{align*}
  \frac{\eta(\beta-1)}{\gamma}(\frac{\gamma}{\eta\beta}T+C_1)^{\frac{\beta}{\beta-1}}+C_2=X,\\
  \frac{\eta(\beta-1)}{\gamma}C_1^{\frac{\beta}{\beta-1}}+C_2=0.
  \end{align*}
\end{theorem}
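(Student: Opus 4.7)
The plan is to treat the two problems separately; in each regime I would combine an interior critical trajectory coming from an Euler--Lagrange calculation with explicit one-parameter families of trajectories that concentrate the entire order into a shrinking subinterval, so as to handle the opposite extremum.

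For problem (\ref{minmaxI}), note that $E[I]=\gamma\int_0^T v_t^\alpha\,dt$ subject to $\int_0^T v_t\,dt=X$ and $v_t\ge 0$. The case $\alpha=1$ is immediate since $E[I]=\gamma X$ is trajectory-independent. For $\alpha\ne1$, Jensen's inequality applied to $v\mapsto v^\alpha$ bounds $\frac{1}{T}\int_0^T v_t^\alpha\,dt$ above (resp.\ below) by $(X/T)^\alpha$ according as $\alpha<1$ or $\alpha>1$, with equality at the constant profile $v_t\equiv X/T$, producing the value $\gamma T^{1-\alpha}X^\alpha$. The opposite extremum is approached by $v_t=(X/\epsilon)\mathbf{1}_{[0,\epsilon]}(t)$, since the resulting integral equals $X^\alpha\epsilon^{1-\alpha}$, which tends to $0$ when $\alpha<1$ and to $+\infty$ when $\alpha>1$.

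For problem (\ref{minmaxJ}) I would plug the power-form impact functions into (\ref{ODE:J}). In the $\alpha=1$ block the equation collapses to $\gamma=\eta\beta(\beta-1)v^{\beta-2}\dot v$, which for $\beta\ne1$ integrates to $v_t^{\beta-1}=\frac{\gamma}{\eta\beta}t+C_1$ and hence $x_t=\frac{\eta(\beta-1)}{\gamma}\bigl(\frac{\gamma}{\eta\beta}t+C_1\bigr)^{\beta/(\beta-1)}+C_2$, the two boundary conditions $x_0=0,\,x_T=X$ producing precisely the system for $C_1,C_2$ stated at the end of the proposition. Since for $\alpha=1$ integration by parts gives $\int_0^T(T-t)v_t\,dt=\int_0^T x_t\,dt$, substituting the critical $v_t$ into $\frac{\gamma}{T}\int x_t\,dt+\frac{\eta}{T}\int v_t^\beta\,dt$ and summing the two contributions, which share a common antiderivative exponent $(2\beta-1)/(\beta-1)$, produces the claimed closed form; the degenerate exponent at $\beta=1/2$ replaces a power by a logarithm. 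The convexity of $v\mapsto v^\beta$ then identifies the interior critical point as a maximum for $\beta<1$ and a minimum for $\beta>1$, and the opposite extremum is obtained by concentrating $v_t$ on a shrinking interval at the right endpoint (both summands vanish simultaneously when $\beta<1$, giving $\min E[J]=0$) or by the temporary-impact term blowing up when $\beta>1$. The purely linear case $\alpha=\beta=1$ reduces $E[J]$ to $\gamma T^{-1}\int x_t\,dt+\eta X/T$ and is extremized by pushing $x_t$ to either endpoint.

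The $\beta=1$ block is handled analogously. Equation (\ref{ODE:J}) reduces to $(T-t)(\alpha-1)\dot v=v$, whose positive solutions are $v_t=C'(T-t)^{-1/(\alpha-1)}$; the constraint $x_T=X$ forces $\int_0^T v_t\,dt$ to converge at $t=T$, which happens only for $\alpha<1$ or $\alpha>2$. In these regimes, substitution gives $\gamma\bigl(\tfrac{\alpha-2}{\alpha-1}\bigr)^{\alpha-1}X^\alpha T^{1-\alpha}+\eta X/T$ at the critical trajectory, a maximum for $\alpha<1$ and a minimum for $\alpha>2$. The opposite extremum in each of these regimes, and both extrema in the band $1<\alpha\le 2$ where no interior critical trajectory exists, are realised by concentration families: $v_t\propto\mathbf{1}_{[T-\epsilon,T]}$ gives $\int(T-t)v^\alpha\,dt\sim\epsilon^{2-\alpha}\to 0$ when $\alpha<2$ (a sharper power-law profile $v\propto(T-t)^{-\gamma}$ with $\gamma\uparrow 1$ handles the borderline $\alpha=2$), while $v_t\propto\mathbf{1}_{[0,\epsilon]}$ drives it to $+\infty$ when $\alpha>1$. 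The main obstacle is the case-by-case bookkeeping: for each pair $(\alpha,\beta)$ one must match the admissibility of the E--L trajectory (integrability of $v_t$ at the endpoints of $[0,T]$) to the correct sign of convexity in order to decide whether the interior critical point is a maximum, a minimum, or neither, and then separately exhibit a concentrating family that realises the opposite extremum.
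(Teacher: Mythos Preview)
Your proposal is correct and follows the same overall architecture as the paper: identify an interior critical trajectory from the Euler--Lagrange equation and then exhibit explicit one-parameter families to pin down the opposite extremum in each regime. The case splits, the reduced ODEs, the closed-form evaluations, and the bookkeeping on integrability of the critical $v_t$ all match.

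There are, however, a couple of genuine simplifications in your version relative to the paper. For problem (\ref{minmaxI}) you invoke Jensen's inequality directly, which immediately gives both the extremal value $\gamma T^{1-\alpha}X^\alpha$ and identifies the constant profile as the extremiser; the paper instead solves the Euler--Lagrange equation and then must separately argue (via its trajectory TA) which side of the extremum it sits on. Your route is shorter and more transparent here. For the opposite extrema you use indicator-type concentration families $v_t\propto\mathbf{1}_{[0,\epsilon]}$ or $\mathbf{1}_{[T-\epsilon,T]}$, whereas the paper builds three smooth families $x_t\propto t^m$, $x_t\propto X-(T-t)^m$, and a logarithmic profile (its TA, TB, TC) and sends $m\to\infty$ or $m\to0^+$. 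Both choices work; the indicator families make the asymptotics one-line computations, while the paper's smooth families have the mild advantage of being $C^1$ trajectories throughout. Your power-law profile $(T-t)^{-\delta}$ with $\delta\uparrow 1$ for the borderline $\alpha=2$ replaces the paper's logarithmic trajectory TC and gives the same limit $\eta X/T$.

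One small point: you reuse the symbol $\gamma$ for the exponent in that last profile, which collides with the permanent-impact coefficient already in play; rename it before writing this up.
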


\begin{proof}
In order to determine whether a solution generated by the Euler-Lagrange equation is a maximal or a minimal solution, we construct three series of trading trajectories.

Given a $m>0$, the first trajectory is
\begin{equation}
\label{eg1}
\begin{cases}
x_t=\frac{X}{T^m}t^m,\\
v_t=\frac{mX}{T^m}t^{m-1},
\end{cases}
\quad t\in[0,T],
\end{equation}
which is denoted as TA.

Given a $m>0$, the second trajectory is
\begin{equation}
\label{eg2}
\begin{cases}
x_t=X-\frac{X}{T^m}(T-t)^m,\\
v_t=\frac{mX}{T^m}(T-t)^{m-1},
\end{cases}
\quad t\in[0,T],
\end{equation}
which is denoted as TB.

And given a $m>0$, the third trajectory is
\begin{equation}
\label{eg3}
\begin{cases}
x_t=\frac{X}{\log\frac{T+m}{m}}\log\frac{T+m}{T-t+m},\\
v_t=\frac{X}{\log\frac{T+m}{m}}\frac{1}{T-t+m},
\end{cases}
\quad t\in[0,T],
\end{equation}
which is denoted as TC.

 Note that these three trading trajectories all satisfy the boundary condition so that $x_0=0$, $x_T=X$ and $v_t\geq0$.
\begin{description}
\item[(1)] For problem (\ref{minmaxI}) \mbox{}\par
\begin{description}
  \item[(a) $\alpha\neq1$] \mbox{}\par
  In this case, equation (\ref{ODE:I}) has solution $v=const>0$ and therefore $v=X/T$. The expected permanent impact is
  \begin{equation}
  E[I]=\gamma\int_0^T(\frac{X}{T})^\alpha dt=\gamma T^{1-\alpha}X^\alpha.
  \end{equation}

  Note that with the trading trajectory TA of (\ref{eg1}), we have
  \begin{align*}
  E[I]=\frac{\gamma X^\alpha T^{1-\alpha}m^\alpha}{\alpha m-\alpha}\xrightarrow{m\rightarrow\infty}
  \begin{cases}
  \infty \quad &\alpha>1,\\
  0      \quad &\alpha<1.
  \end{cases}
  \end{align*}

  Hence when $\alpha>1$, the solution of the differential equation (\ref{ODE:I}) gets a minimal expected impact, the lower bound of the permanent impact is $\gamma T^{1-\alpha}X^\alpha$, and the upper bound is $\infty$. Similarly, when $\alpha<1$, the lower bound is $0$ and the upper bound is $\gamma T^{1-\alpha}X^\alpha$.
\end{description}

  \begin{description}
  \item[(b) $\alpha=1$] \mbox{}\par
  In this case, equation (\ref{ODE:I}) degenerates, and the permanent impact function becomes $g(v)=\gamma v$. Accordingly, the expected permanent impact is
  \begin{equation}
  E[I]=\int_0^T\gamma vdt=\gamma X=const,
  \end{equation}
  which means that any trading trajectory has the same expected permanent impact, i.e., the upper bound and lower bound are both $\gamma X$.

  We remark that, with a total $X$ shares to be transacted, when $\alpha=1$, the permanent impact is a linear pattern and consequently the expected permanent impact is a constant regardless of trading velocity. When $\alpha<1$, a trading trajectory with uniform velocity leads to the maximal expected permanent impact. This means that one can obtain a lower expected permanent impact as long as the shares are transacted over a shorter time interval. When $\alpha>1$, similarly, a trading trajectory with uniform velocity leads to the minimal expected permanent impact, and a shorter time interval for the transaction leads to a greater expected permanent impact.
  \end{description}

\end{description}

\begin{description}
\item[(2)] For problem (\ref{minmaxJ}) \mbox{}\par
\begin{description}
  \item[(a) $\alpha=1$ and $\beta\neq 1$]\mbox{}\par
  In this case, equation (\ref{ODE:J}) reduces to
  \begin{equation}
  \label{ODE:Ja}
  \gamma=\eta\beta(\beta-1)v^{\beta-2}\dot{v}.
  \end{equation}
  The closed-form solution of (\ref{ODE:Ja}) is
  \begin{align}
  v_t&=(\frac{\gamma t}{\eta\beta}+C_1)^{\frac{1}{\beta-1}},\\
  x_t&=\frac{(\beta-1)\eta}{\gamma}(\frac{\gamma t}{\eta\beta}+C_1)^{\frac{\beta}{\beta-1}}+C_2,
  \end{align}
  where $C_1$ and $C_2$ satisfy the following equations:
  \begin{align}
  \frac{\eta(\beta-1)}{\gamma}(\frac{\gamma}{\eta\beta}T+C_1)^{\frac{\beta}{\beta-1}}+C_2=X,\\
  \frac{\eta(\beta-1)}{\gamma}C_1^{\frac{\beta}{\beta-1}}+C_2=0.
  \end{align}
  When $\beta\neq\frac{1}{2}$,
  \begin{equation}
  \label{BJA1}
  E[J]=\frac{\eta^2\beta^2(\beta-1)}{\gamma T(2\beta-1)}[(\frac{\gamma}{\eta\beta}T+C_1)^{\frac{2\beta-1}{\beta-1}}
  -C_1^{\frac{2\beta-1}{\beta-1}}]+\gamma C_2.
  \end{equation}
   When $\beta=\frac{1}{2}$,
   \begin{equation}
   \label{BJA2}
   E[J]=\frac{\eta^2}{4\gamma T}\log(1+\frac{2\gamma T}{\eta C_1}).
   \end{equation}

  To verify the solution of differential equation (\ref{ODE:Ja}) is indeed corresponding a maximal or minimal realized impact, we choose the trading trajectory TA of (\ref{eg1}), the expected realized impact of which is
  \begin{equation}
   E[J]=\frac{\gamma}{m+1}X+\frac{\gamma X^\beta T^{-\beta}m^\beta}{\beta m-\beta+1}\xrightarrow{m\rightarrow\infty}
      \begin{cases}
      \infty, \quad&\beta>1,\\
      0,\quad &\beta<1.
      \end{cases}
  \end{equation}

  Hence, when $\beta<1$, we conclude that the solution of differential equation (\ref{ODE:Ja}) corresponds to the maximal expected realized impact. When $\beta>1$, we obtain the minimal expected realized impact. More specifically, when $\beta\neq 1/2$ and $\beta<1$, the lower bound of the expected realized impact is 0 and the upper bound is (\ref{BJA1}). When  $\beta=1/2$, the lower bound is $0$ and the upper bound is (\ref{BJA2}). On the other hand, when $\beta>1$, the lower bound is (\ref{BJA1}) and the upper bound is $\infty$.
  \end{description}

  \begin{description}
  \item[(b) $\alpha\neq1$ and $\beta=1$] \mbox{}\par
    In this case, equation (\ref{ODE:J}) reduces to
  \begin{equation}
  \label{ODE:Jb}
  \frac{1}{T}\gamma\alpha v^{\alpha-1}=\frac{T-t}{T}\gamma\alpha(\alpha-1)v^{\alpha-2}\dot{v}.
  \end{equation}
  When $\alpha>2$ or $\alpha<1$, the closed-form solution that corresponds to this case is
  \begin{align}
  v_t&=\frac{X}{T}(T-t)^{\frac{1}{1-\alpha}},\\
  x_t&=X[1-(\frac{T-t}{T})^{\frac{\alpha-2}{\alpha-1}}].
  \end{align}
  and when $1<\alpha\leq2$, there is no solution to the differential equation.

  Note that when $\beta=1$, the expected realized impact is
  \begin{align*}
  E[J]&=\int_0^T\frac{T-t}{T}\gamma v^\alpha dt+\frac{1}{T}\int_0^T\eta vdt\\
      &=\frac{\gamma}{T}\int_0^T(T-t)v^{\alpha}dt+\frac{\eta}{T}X\\
      &\geq\frac{\eta X}{T},
  \end{align*}
  and the solution of (\ref{ODE:Jb}) has expected realized impact
  \begin{equation}
  \label{BJB1}
  E[J]=\gamma(\frac{\alpha-2}{\alpha-1})^{\alpha-1}X^\alpha T^{1-\alpha}+\frac{\eta X}{T}.
  \end{equation}
  In order to verify whether (\ref{BJB1}) has the maximal or minimal expected realized impact, we also resort to trading trajectories TA of (\ref{eg1}) and TB of (\ref{eg2}). When $\alpha<2$, the expected realized impact of trading trajectory TA of (\ref{eg1}) is
  \begin{equation}
  E[J]=\frac{\gamma m^\alpha X^\alpha T^{1-\alpha}}{(\alpha m-\alpha+1)(\alpha m-\alpha+2)}+\frac{\eta X}{T}\xrightarrow{m\rightarrow\infty}\frac{\eta X}{T}.
  \end{equation}
  When $\alpha>1$, the expected realized impact of trading trajectory TB of (\ref{eg2}) is
  \begin{equation}
  E[J]=\frac{\gamma m^\alpha X^\alpha T^{1-\alpha}}{\alpha m-\alpha+2}+\frac{\eta X}{T}\xrightarrow{m\rightarrow\infty}\infty.
  \end{equation}
  This indicates that when $\alpha<1$, with a large enough $m$, trading trajectory TA of (\ref{eg1}) has a lower expected realized impact than the solution of differential equation (\ref{ODE:Jb}). Hence the solution results in the maximal expected realized impact,
  i.e., the lower bound is $\eta X/T$ and the upper bound is (\ref{BJB1}).

   Similarly, when $\alpha>2$, the solution of the differential equation results in the minimal expected realized impact, i.e., the lower bound is (\ref{BJB1}) while the upper bound is $\infty$. When $1<\alpha<2$, there are two sequences of trading trajectories that have a realized impact converging to $\eta X/T$
  and $\infty$, respectively. This implies that the realized impact interval is $(\eta X/T,\infty)$.

  When $\alpha=2$, the expected realized impact of trading trajectory TC of (\ref{eg3}) is
  \begin{align}
  E[J]&=\frac{\gamma X^2}{T\log^2\frac{T+m}{m}}\int_0^T\frac{T-t}{(T-t+m)^2}dt+\frac{\eta X}{T}\\
      &=\frac{\gamma X^2}{T}\frac{\log\frac{T+m}{m}-\frac{T}{T+m}}
      {\log^2\frac{T+m}{m}}+\frac{\eta X}{T}\xrightarrow{m\rightarrow0+}\frac{\eta X}{T}
  \end{align}
  Combined with trading trajectory TB of (\ref{eg2}), we can conclude that the realized impact interval is $(\eta X/T,\infty)$.
  \end{description}

  \begin{description}
    \item[(c) $\alpha=1$ and $\beta=1$] \mbox{}\par
  In this case, equation (\ref{ODE:J}) reduces to
  \begin{equation}
  \frac{\gamma}{T}=0.
  \end{equation}
  Hence this condition results in no solution to differential equation (\ref{ODE:J}). Indeed, note that the expected realized impact in this case is
  \begin{align*}
  E[J]&=\int_0^T\frac{T-t}{T}\gamma vdt+\frac{1}{T}\int_0^T\eta vdt\\
      &=\frac{\gamma}{T}\int_0^Tx_tdt+\frac{\eta X}{T}\in(\frac{\eta X}{T},\frac{\eta X}{T}+\gamma X).
  \end{align*}
  Similar to the former cases, trading trajectory TA of (\ref{eg1}) has an expected realized impact of
  \begin{equation}
  E[J]=\frac{\gamma}{m+1}X+\frac{\eta}{T}X\xrightarrow{m\rightarrow\infty}\frac{\eta X}{T},
  \end{equation}
  and trading trajectory TB of (\ref{eg2}) corresponds to
  \begin{equation}
  E[J]=\gamma X-\frac{\gamma X}{m+1}+\frac{\eta X}{T}\xrightarrow{m\rightarrow\infty}\frac{\eta X}{T}+\gamma X.
  \end{equation}
  This means that no trading trajectory can result in the maximal or minimal expected realized impact, i.e., the impact interval is $(\eta X/T,\eta X/T+\gamma X)$, which is consistent with the result that the the Euler-Lagrange equation has no solution.
  \end{description}

  \begin{description}
    \item[(d) Other cases] \mbox{}\par
   In addition to the above cases, the ODE (\ref{ODE:J}) is general nonlinear and its solvability remains more specific discussion.

   The summary remark of above are that, with a fixed $X$, when $\beta=1$, the temporary impact is a linear pattern. Similar to the permanent impact case, the temporary impact is invariant regardless of trading velocity. Then the realized impact is only affected by the permanent impact. When $\alpha<1$, a trading trajectory with a decreasing velocity gets a maximal permanent impact, which is positively related to $T$. When $\alpha=1$, the permanent impact is independent of $T$. When
   $1<\alpha\leq2$, the permanent impact can be any positive value regardless of $T$. When $\alpha>2$, a trading trajectory with increasing velocity gets a minimal permanent impact, which is negatively related to $T$.
   \end{description}

  \end{description}

\end{proof}

   The Propostion \ref{Propostion2} implies that we can control the permanent impact and the realized impact when $\alpha$ is not equal to 1. This means trader can manipulate the underlying price to some extent. By quickly buying a large volume of shares and then slowly selling part of the previously bought shares, she can open the position with nearly no price impact. By repeating this procedure, she can set up a large enough position with a small price impact. She buys slowly to push the price high enough, then she closes her position with a small impact in a way opposite to her position direction.

   Finally trader can profit just because the execute manner, which is a evidence of the inefficiency of the market. The farther that $\alpha$ deviates from 1, the more inefficient the market is and the greater the opportunity for this kind of profiting is.

\section{Data description}
Our empirical work is based on Chinese stock market data from January 2006 to January 2016. Initially we chose 2,542 stocks, giving us a corresponding total number of trade and quote (TAQ) tick records of 17,213,238,343. Taking into account that some stocks may have had invalid or incomplete data, we filtered the data sample. Ultimately, we selected 1,993 stocks that had valid data for at least five years, making the total number of tick records for the chosen stocks 16,189,238,181.

Note that we only had the tick-by-tick data; we did not have the true orders submitted by all traders. Therefore, we adopted the algorithm introduced by \cite{lee1991inferring} to identify the orders. The Lee-Ready algorithm suggested that we could identify the order as buyer-initiated if the transaction price was closer to the prevailing best ask price, and we could identify it as seller-initiated if the transaction price was closer to the prevailing best bid price. If the order was transacted at the prevailing mid-quote price, we identified it as buyer-initiated if the transacted price was higher than the prevailing transacted price, and seller-initiated if the transacted price was lower than the prevailing transacted price.

\section{Empirical analysis}
We apply the impact model to Chinese stock market to explore the mechanism of Chinese market. Specifically, we establish the statistical model of the price impact. Moreover, we assess our model by using some criteria including BIC and CRPS.

\subsection{Statistical model}
When estimating the parameters of the impact functions, \cite{almgren2005direct} did not consider the correlation between the temporary impact and the permanent impact. Instead, we calibrate the parameters based on the joint distribution of the impact statistics. Furthermore, \cite{almgren2005direct} only considered the heteroskedasticity across the underlying, while we consider the  heteroskedasticity of volatility in the time dimension besides that.

To estimate the parameters in our model, it is necessary to address the discrete version of our model based on our observed variables. We take 15 minutes as the time unit and analyse the impact caused by the executing of orders within each 15-minute segment. In the following of this part, the time tick $t$ denotes the number of the time unit of trading period. In addition, we choose $t_{post}$ as 15 minutes after the executing time interval.

In the discrete version of our model, we denote $(I_t,J_t)^T$ by $K_t$. And $v_t$, $V_t$, $\sigma_t$ are the trading velocity, trading volume, and volatility, respectively, within 15 minutes. In addition, $v_t$ is held to be positive if the trading direction is buying, and negative if it is selling. Note that the subindex $t$ of $\sigma_t$ above reflects the heteroskedasticity of volatility in the time dimension. We suppose that $\sigma_t$ explains the heteroskedasticity better than variables like turnover. The general way to calculate $\sigma_t$ is using the last price. But for the case of high frequency data, the return has negative correlation due to the bid-ask spread. As a result, the volatility is over-estimated. Instead we introduce a method for computation of $\sigma_t$ according to the so-called smart-price, which is a better estimate for the real price of the underlying in 15 minutes. The smart-price is defined as,
\[smartprice_t=(P_a*Q_b+P_b*Q_a)/(Q_a+Q_b),\]
where $P_a$, $P_b$, $Q_a$, $Q_b$ are the best ask price, the best bid price, and the corresponding volumes, respectively.

The distribution of $K_t$ follows Propostion \ref{Propostion1}. Recognizing that the trading volume across stocks varies, we can eliminate this effect by normalizing trading vocality $v_t$ by  corresponding trading volume $V_t$. Hence our model can be expressed as
\begin{align}
g(v_t)&=\gamma\sigma_t sgn(v_t)(\frac{v_t}{V_t})^\alpha,\\
h(v_t)&=\eta\sigma_tsgn(v_t)(\frac{v_t}{V_t})^\beta,
\end{align}
where $sgn(\cdot)$ is the sign function.

\subsection{Model estimation}
  With the joint distribution of impact statistics $K_t$ being Bivariate Normal as shown in Propostion \ref{Propostion1}, it is natural to use the maximum likelihood method to estimate the parameters of our model. The log-likelihood function of one stock over the all observations is
  \begin{align*}
  \mathcal{L}(\alpha,&\beta,\gamma,\eta,K_t,v_t,V_t,\sigma_t,t\in\mathcal{T})\\
  &=-\frac{1}{2}\sum_{t\in\mathcal{T}}\Big{[}(K_t-\mu_t)^T\Sigma_t^{-1}(K_t-\mu_t)+log(|\Sigma_t|)+2log(2\pi)\Big{]},
  \end{align*}
  where $\mathcal T$ is the data set, and
  \begin{align*}
  \mu_t&=\begin{pmatrix}T\gamma\sigma_tsgn(v_t)(\frac{v_t}{V_t})^\alpha\\
  \frac{1}{2}T\gamma\sigma_tsgn(v_t)(\frac{v_t}{V_t})^\alpha+\eta\sigma_tsgn(v_t)(\frac{v_t}{V_t})^\beta\end{pmatrix},\quad\\
\Sigma_t&=\begin{pmatrix}\sigma_t^2T_{post}&\frac{1}{2}\sigma_t^2T\\\frac{1}{2}\sigma_t^2T&\frac{1}{3}\sigma_t^2T\end{pmatrix}.
\end{align*}

First we estimate the parameters for each stock. The mean, maximal, minimal of time ticks for each stock are 28,564, 38,249 and 18,039 respectively. And the results are summarized in Figure \ref{figure1}. At first glance, the four parameters are distributed as bell-shaped curves. Figure \ref{figure1} shows that $\alpha$ is mainly distributed in the interval of $[0.6,0.75]$ and $\beta$ in $[0.6,0.8]$. In addition, we can see that the distributions of the parameters are concentrated, which motivates us to explore whether the parameters are constants shared by all stocks.

We proceed to estimate the consistent parameters jointly data over all stocks as well as in different boards, including the Main Board of the Shanghai Stock Exchange, the Main Board of the Shenzhen Stock Exchange, the Small and Medium Enterprise Board, and the Growth Enterprise Market. The total amount of time ticks for these five sets are 56,927,427, 25,442,267, 12,311,959, 13,730,481, and 5,442,720 respectively. Table \ref{table1} lists the results of the estimation, with the first column providing the abbreviations for the boards and the numbers in parentheses being the standard errors. We can see that $\alpha$ and $\beta$ are significantly greater than 0.5 yet less than 1. As for $\gamma$ and $\eta$, they are significantly different. These results attribute to the different dimensions of the impact functions $g(v)$ and $h(v)$.

The parameter $\alpha$ we obtained in the Chinese market is systematically less than that in the American market as obtained by \cite{almgren2005direct}, which indicates that the Chinese market is less efficient than the American market. This difference might be due to the immaturity of Chinese investors and the market.

\begin{table}[htbp]
 \footnotesize
 \centering
  \caption{The result of joint estimating for parameters.}
 \begin{tabular}{c|cccc}
 \hline
 &$\hat\alpha$&$\hat\beta$&$\hat\gamma$&$\hat\eta$\\
 \hline
 All&0.6866 (0.0012)&0.7090 (0.0024)&4.5713 (0.0088)&0.0520 (0.0002)\\
 SH&0.6799 (0.0023)&0.6986 (0.0045)&4.5796 (0.0164)&0.0547 (0.0004)\\
 SZ&0.6827 (0.0021)&0.6769 (0.0039)&4.5321 (0.0127)&0.0537 (0.0003)\\
 SME&0.6835 (0.0019)&0.7048 (0.0042)&4.4363 (0.0123)&0.0506 (0.0003)\\
 GME&0.6849 (0.0028)&0.7683 (0.0075)&4.4835 (0.0149)&0.0490 (0.0003)\\
 \hline
 \end{tabular}
  \label{table1}
 \end{table}
\subsection{Model assessment}
To justify our estimation results, we compare our model with the one presented in \cite{almgren2005direct}. We use the continuously ranked probability score (CRPS, \cite{matheson1976scoring}) and the BIC criterion.

CRPS is a proper scoring rule which was used by \cite{gneiting2007strictly} to assess the estimated probability distribution. When the estimated probability distribution coincides with the true probability distribution, the expectation for CRPS reaches its maximum, which is defined as
\[CRPS(F, y)=-\int_{-\infty}^{\infty}(F(x)-1_{\{y\leq x\}})^2dx,\]
where F is the estimated cumulative distribution and y is an observation. It can also be expressed as
\[CRPS(F,y)=\frac{1}{2}E_F|Y-Y'|-E_F|Y-y|,\]
where $Y$ and $Y'$ are independent variables with distribution F.

We compare the CRPS of our model with that of \cite{almgren2005direct}. According to its property, we expect to obtain a relatively greater CRPS. Thus, the comparision is assessed by the difference between the CRPS of our model and that of \cite{almgren2005direct} across all stocks. Figure \ref{figure2} (a-c) shows that the difference is mostly positive; therefore, our model outperforms the model in \cite{almgren2005direct}.

In addition, BIC is an often-used criterion for model selection. Note that BIC is consistent, and choosing the model with the minimum BIC is equivalent to choosing the model with the maximum post-probability. Thus we choose the model with the lower BIC. Similarly, we compare the BIC of our model and those of \cite{almgren2005direct} across all stocks. Figure \ref{fig24} shows that in most cases, our model has lower BIC that the model produced by \cite{almgren2005direct}.

Furthermore, we compare the CRPS and BIC for all stocks and the aforementioned boards. Figure \ref{table2} shows that the values of CRPS and BIC in our model are better than those of \cite{almgren2005direct}. Consequently, our model is more reliable.

\begin{table}[htbp]
 \scriptsize
 \centering
  \caption{The result of joint estimating for parameters.}
 \begin{tabular}{c|cccc}
 \hline
 &CRPS for $I$&CRPS for $J$&CRPS for $J-\frac{I}{2}$&BIC\\
 &Our Alm2005&Our Alm2005&Our Alm2005&Our Alm2005\\
 \hline
 All&-0.0277 -0.0298&-0.0061 -0.0097&-0.0127 -0.0127&1867891.18 2040472.65\\
 SH&-0.0280 -0.0287&-0.0056 -0.0095&-0.0123 -0.0131&748337.13 1144537.64\\
 SZ&-0.0281 -0.0301&-0.0063 -0.0101&-0.0129 -0.0130&893682.96 955813.35\\
 SME&-0.0283 -0.0307&-0.0063 -0.0100&-0.0129 -0.0131&974733.32 1011478.44\\
 GME&-0.0281 -0.0292&-0.0066 -0.0103&-0.0128 -0.0134&514790.13 592355.86\\
 \hline
 \end{tabular}
  \label{table2}
 \end{table}

\subsection{Random effect of $\alpha$}
The results of the consistent estimate of $\alpha$ indicate that all of the stocks in the market may share the same $\alpha$. To verify this conjecture, we convert our model to a random effect model. More specifically, we assume that $\alpha$ is distributed as $N(\mu_\alpha,\sigma_\alpha^2)$. We introduce a hypothesis testing $H_0:\sigma_\alpha=0\,v.s.\,H_1:\sigma_\alpha>0$. If we reject $H_0$, we conclude $\alpha$ has random effect, which means that all of the stocks do not share the same $\alpha$. On the contrary, we believe $\alpha$ is a constant.

As in previous work, we analysed the random effect based on all of the stocks and the boards. The first column in Table \ref{table3} is $\hat\sigma_\alpha$ for our model, while the third column is for Almgren's model. The second and fourth columns are the p-values corresponding to the two models. In our model, we cannot reject the hypothesis $H_0$, while the result in Almgren's model significantly rejects $H_0$. Hence all of the stocks in the Chinese stock market share the same $\alpha$ in our model. As it is more consistent, our model performs better than Almgren's model in explaining the Chinese stock market.
\begin{table}[htbp]
 \centering
 \caption{Result of random effect analysis.}
 \begin{tabular}{c|cccc}
 \hline
  &$\hat\sigma_\alpha$&pvalue&$\hat\sigma_\alpha$&pvalue\\
 \hline
 ALL&0.0005&0.4776&0.0184&0.0007\\
 SH&0.0010&0.3123&0.0325&0.0012\\
 SZ&0.0009&0.3742&0.0278&0.0009\\
 SME&0.0009&0.1787&0.0225&0.0010\\
 GME&0.0012&0.3278&0.0055&0.0012\\
 \hline
 \end{tabular}
  \label{table3}
 \end{table}

\section{Conclusions}
Based on the assumptions of the price process and the impact functions, we formulated the models of permanent and temporary impact. We obtained the explicit optima of the expected realized impact and the permanent impact by means of constructing three series of trading trajectories for specific cases. Our findings suggest that when $\alpha$ is less than 1, the market is inefficient.

We improve upon the model in \cite{almgren2005direct} in two aspects. First, we take into account the heteroscedasticity in the time dimension by introducing the time-varying volatility model for a more in line with the reality of the market. In empirical analysis, we calibrate the market consistent parameters on the joint model with correlation between the permanent impact and the temporary impact. Moreover, CRPS and BIC indicate that our model is systematically better than Almgren's model. The test of random effect suggests $\alpha$ is a constant with statistically significant in the market. This result justifies that our model can reveal some mechanism of Chinese stock market.

We applied our model to the Chinese stock market with a large trade and quote dataset. The result of the empirical analysis implies that the Chinese market is less efficient than the American market. The findings provide some insights of the market and have significant guidance for regulators. For instance, regulators might enhance the trade charge for short swing trading or limit short swing trading. In practice, our model can also be applied to algorithm trading.

\newpage
\section*{Appendix A: Calculation of the Impact Variables}
\subsection*{Permanent Impact}
\begin{align*}
I=&\frac{S_{post}-S_0}{S_0}\\
 =&\frac{S_0+\int_0^TS_0g(v_t)dt+\int_0^{T_{post}}S_0\sigma dB_t-S_0}{S_0}\\
 =&\int_0^Tg(v_t)dt+\sigma B_{T_{post}}.
\end{align*}

\subsection*{Realized Impact}
\begin{align*}
J&=\frac{\bar S-S_0}{S_0}=\frac{\frac{1}{T}\int_0^T\tilde S_tdt-S_0}{S_0}\\
 &=\frac{\frac{1}{T}\int_0^T[S_t+S_0h(v_t)]dt-S_0}{S_0}\\
 &=\frac{1}{T}\int_0^T[(\int_0^tg(v_s)ds+\sigma B_t)+h(v_t)]dt\\
 &=\int_0^T\frac{T-t}{T}g(v_t)dt+\frac{1}{T}\int_0^Th(v_t)dt+\frac{\sigma}{T}\int_0^TB_tdt\\
 &=\int_0^T\frac{T-t}{T}g(v_t)dt+\frac{1}{T}\int_0^Th(v_t)dt+\sigma B_T-\frac{\sigma}{T}\int_0^TtB_t\\
 &=\int_0^T\frac{T-t}{T}g(v_t)dt+\frac{1}{T}\int_0^Th(v_t)dt+\frac{\sigma}{T}\int_0^T(T-t)dB_t.
\end{align*}

\bibliographystyle{plain}

\bibliography{Arxiv_han}

\end{document}